\newtheorem{theorem}{Theorem}
\newtheorem{lemma}[theorem]{Lemma}      
\newtheorem{proposition}[theorem]{Proposition}
\title{Low-Rank-Based Approximate Computation with Memristors}
\name{
    Binyu Lu$^{\star \dagger}$
    \qquad
    Matthias Frey$^{\star}$
    \qquad
    Stark Draper$^{\dagger}$
    \qquad
    Jingge Zhu$^{\star}$
}
\address{
    $^{\star}$ Department of Electrical and Electronic Engineering, University of Melbourne \\
    $^{\dagger}$ Department of Electrical and Computer Engineering, University of Toronto
}
\begin{document}

\maketitle

\begin{abstract}
    Memristor crossbars enable vector–matrix multiplication (VMM), and are promising for 
    low-power applications.
    However, it can be difficult to write the memristor conductance values exactly.
    To improve the accuracy of VMM, we propose a scheme based on low-rank matrix approximation.
    Specifically, singular value decomposition (SVD) is first applied to obtain a low-rank approximation of the target matrix, which is then factored into a pair of smaller matrices.
    Subsequently, a two-step serial VMM is executed, where the stochastic write errors are mitigated through step-wise averaging.
    To evaluate the performance of the proposed scheme, we derive a general expression for the resulting computation error and provide an asymptotic analysis under a prescribed singular-value profile, 
    which reveals how the error scales with matrix size and rank. 
    Both analytical and numerical results confirm the superiority of the proposed scheme compared with the benchmark scheme.
\end{abstract}

\section{Introduction}
\label{sec:intro}

With the growth of data-intensive workloads, particularly in fields such as machine learning and signal processing, the traditional von Neumann architecture faces significant challenge in latency and energy consumption.
This is due to frequent data movement between memory and processing units \cite{9563028, Boroumand_Google}.
In-memory computing (IMC) addresses this challenge by performing computations directly where the data resides.
It has emerged as a promising paradigm that offers outstanding energy efficiency and computation density \cite{silvano2025survey}.

IMC can physically be realized with memristor crossbars, which are two-dimensional (2D) arrays composed of row wordlines (WLs) and column bitlines (BLs).
A memristor cell located at each intersection.
A memristor is typically a non-volatile resistive device whose conductance can be tuned continuously by modulating the electric field and heat \cite{yao2020fully}.
There are three primary IMC uses of memristor crossbars: vector–matrix multiplication (VMM), Boolean logic computation, and Hamming distance computation \cite{10323169}.
Among these, VMM has attracted substantial interest due to its widespread use in deep neural networks (DNN) and in optimization \cite{jain2019neural}.
We focus on VMM in this paper.

In practice, memristor crossbars suffer from noisy writes.
For example, the programmed conductance values of the memristors may differ from the target values \cite{liu2018memristor}.
This induces noise in computation.
To mitigate the impact of such conductance variations, prior work has introduced redundancy to the system by applying coding-theoretic approaches \cite{Roth2019} or adaptive matrix mapping algorithms \cite{Lixue2018}.
Despite these efforts, studies aimed at enhancing the intrinsic robustness of analog computation remain limited.
The existing studies primarily focus on the standard VMM model and are largely limited to performance analysis.
For example, the authors in \cite{Dupraz2020Noisy} approximated the computation error of the analog VMM by a Gaussian distribution, and \cite{Dupraz2021ITW} presented the statistical characteristics of layer-by-layer computation errors in DNNs.
However, systematic approaches to reducing computation error beyond the standard VMM remain underexplored.

Motivated by this limitation, we propose a scheme that substantially reduces the computation error relative to a baseline scheme. 
First, singular value decomposition (SVD) is applied to construct a low-rank approximation of the target matrix, which is then factored into a pair of smaller matrices.
Then, VMMs are repeatedly performed on these smaller matrices in two serial steps, with step-wise averaging to suppress error resulting from the write operation.
We derive a general expression for the overall computation error and provide an asymptotic analysis under a prescribed singular value profile.
For $n\times n$ matrices, the error order improves from $\mathcal{O} (n^{2})$ to $\max \{\mathcal{O} (n^{3/2}), \mathcal{O} (n^{2-\alpha})\}$ in the asymptotic regime $n \to \infty$.
The parameter $\alpha \in (0, 1]$ describes how the matrix rank scales with $n$.
Simulation results validate the general performance analysis and demonstrate the superiority of the proposed scheme compared with the baseline scheme.

\section{System Model}

\subsection{Noiseless VMM on memristor crossbars}

Analog VMM on memristor crossbars can be expressed as  
\begin{equation} \label{noiselesscomputation}
	\mathbf{c} = \mathbf{b} \mathbf{A},
\end{equation}
where $\mathbf{b} \! \in \hspace{-1pt} \mathbb{R}^{1 \times m}$ is the input vector, $\mathbf{A} \! \in \hspace{-1pt} \mathbb{R}^{m \times n}$ is the target matrix, and $\mathbf{c} \! \in \hspace{-1pt} \mathbb{R}^{1 \times n}$ is the output vector.
Note that the bold lowercase letters denote row vectors in this paper.

Each entry $b_j$ of the vector $\mathbf{b}$ is transformed into a voltage proportional to its value,
and each entry $a_{j,k}$ of the matrix $\mathbf{A}$ is written to a memristor as the conductance value $g_{j,k}$ at the junction of WL $j$ and BL $k$, normalized by the feedback resistance $r_{\mathrm{T}}$ of a transimpedance amplifier (TIA) located at the end of each BL, i.e., $a_{j, k} = g_{j,k}/r_{\mathrm{T}}$.
The current $i_k$ at each BL is $i_k = \sum_{j=1}^m g_{j,k} b_j$.
The output TIA transforms the current $i_k$ into voltage $c_k = r_{\mathrm{T}} i_k$.
The product $\mathbf{b} \mathbf{A}$ is read as the voltage vector $\mathbf{c}$, at the grounded column BL outputs.

In this paper, we impose a total magnitude constraint on the target programming conductance values. It is formulated as
\begin{equation} \label{powerconstraint}
    \sum_{j=1}^{m} \sum_{k=1}^{n} g_{j,k}^2 \leq mn \rho,
\end{equation}
where $\rho $ is a constant determined by the intrinsic magnitude property of the memristors.
The constraint takes a form similar to the optimization objective used in \cite{Dupraz2021ITW}.

Since memristor conductances are non-negative, the matrix is typically expressed using two crossbars, one for positive coefficients and one for negative coefficients \cite{liu2018memristor}.
The final computational result is obtained through current subtraction in the analog domain.
We analyze the computation error for the positive crossbar.
The negative one can be analyzed analogously.
Combining them yields the total error expression.

\subsection{Baseline noisy computation scheme}
\label{2_2BaselineScheme}

Due to the fact that memristor programming is inherently imprecise, the actually programmed value, denoted by $a_{j,k}^{\prime}$, may deviate from the target value $a_{j,k}$.
Such errors lead to computation errors.
We write $a_{j, k}^{\prime} = (g_{j,k} + \Delta g_{j,k} )/r_{\mathrm{T}}$, where
$\Delta g_{j,k}$ is the write-time conductance variation.
This induces the coefficient-level perturbation  $E_{j,k} = \Delta g_{j,k}/ r_{\mathrm{T}}$ \cite{Chen2018noisymemristor, liu2018memristor}.
We model the programmed values as random variables as $A_{j,k}^{\prime} =  a_{j,k} + E_{j,k}$.
Throughout this paper, we assume that all memristors have identical fabrication quality,
meaning that $E_{j,k}$ are independent and identically distributed (i.i.d.) random variables with zero mean and variance $\sigma_e^2$.
In matrix form, $\mathbf{A}^{\prime} = \mathbf{A} + \mathbf{E}$,
where $\mathsf{Cov}(\mathsf{vec}(\mathbf{E})) =\sigma_e^2 \mathbf{I}_{mn}$, and the operator $\mathsf{vec}(\cdot)$ represents vectorization by row stacking.
Then, the computation actually performed by the hardware is expressed as
\begin{equation}\label{noisycomputation}
	\mathbf{c}^{\prime} = \mathbf{b} \left( \mathbf{A} + \mathbf{E} \right).
\end{equation}

In applications of interest, the matrix $\mathbf{A}$ changes much less frequently than does input $\mathbf{b}$, and can remain fixed entirely \cite{Roth2019}.
To evaluate the system performance, we compute the expected error over different input vectors $\mathbf{b}$, and over different realizations of memristor noise $\mathbf{E}$.
The later captures device-to-device variability.
Accordingly, the expected computation error for a fixed matrix $\mathbf{A}$ is
\begin{equation} \label{baselineexpression}
    \mathcal{E}^{\prime}
    =
    \mathbb{E}_{\mathbf{b}, \mathbf{E}} \!
    \left[
        \left\| \mathbf{c}' - \mathbf{c} \right\|_{\mathsf{F}}^2
    \right]
    =
    \mathbb{E}_{\mathbf{b}, \mathbf{E}} \! \left[ \left\| \mathbf{b} \mathbf{E} \right\|_{\mathsf{F}}^2 \right].
\end{equation}
Note that \eqref{baselineexpression} increases linearly with respect to the array size $mn$.
This can be substantial for large matrices.
To reduce the computation error, we now introduce an approximate computation scheme, based on the SVD and a serial two-step computation.

\section{Proposed Approximate computation scheme}

Our scheme replaces the baseline one-time large-scale VMM with multiple low-rank VMMs.
This can significantly reduce the expected computation error.
Specifically, we first decompose the large matrix into a pair of smaller matrices, perform parallel computations on each, and then average the results to improve accuracy.

\subsection{Rank-$k$ approximation of matrix $\mathbf{A}$}

Let the SVD of the matrix $\mathbf{A}$ be $\mathbf{A} = \mathbf{U} \mathbf{\Sigma} \mathbf{V}^{\mathsf{T}}$, in which $\mathbf{U}$ and $\mathbf{V}$ are orthogonal, and the rectangular matrix $\mathbf{\Sigma}$ contains the $r$ singular values, $\sigma_1 \geq \cdots \geq \sigma_r \geq 0$, along its diagonal, with $r \leq \min \{m, n\}$.
We partition the matrices $\mathbf{U}$, $\mathbf{V}$, and $\mathbf{\Sigma}$ as 
\begin{align}
    \mathbf{U} = \left[\mathbf{U}_k \ \mathbf{U}_\perp \right], \
    \mathbf{V} = \left[\mathbf{V}_k \ \mathbf{V}_\perp \right], \
    \mathbf{\Sigma} = \mathsf{diag} \left[ \mathbf{\Sigma}_k, \mathbf{\Sigma}_{r-k} \right],
\end{align}
where $\mathbf{U}_k$ and $\mathbf{V}_k$ consist of the first $k$ columns of $\mathbf{U}$ and $\mathbf{V}$, corresponding to the largest $k$ singular values in $\mathbf{\Sigma}_k$.
The best rank-$k$ approximation in terms of Frobenius norm is \cite{Eckart_Young_1936}
\begin{equation}
    \mathbf{A}_k = \mathbf{U}_k \mathbf{\Sigma}_k \mathbf{V}_k^{\mathsf{T}}.
\end{equation}
For in-memory implementation, we factor $\mathbf{A}_k$ as $\mathbf{A}_k = \mathbf{L} \mathbf{R}$,
where $\mathbf{L} = \mathbf{U}_k \mathbf{\Sigma}_k^{\frac{1}{2}}$ and $\mathbf{R} = \mathbf{\Sigma}_k^{\frac{1}{2}} \mathbf{V}_k^{\mathsf{T}}$.
This reduces the number of memristors used from $mn$ to $mk+nk$.
The low-rank truncation error introduced by the approximation is controlled by $k$, with $k \leq r$.

\subsection{Serial two-step computation}

Based on this best rank-$k$ approximation of $\mathbf{A}$, we now execute a two-step computation to approximate the result $\mathbf{c}_k = \mathbf{b A}_k = \mathbf{b L R}$, as shown in Fig.~\ref{dot-product-model}.

\begin{figure}[!t]
	\centering
	\includegraphics[width = 0.9 \linewidth]{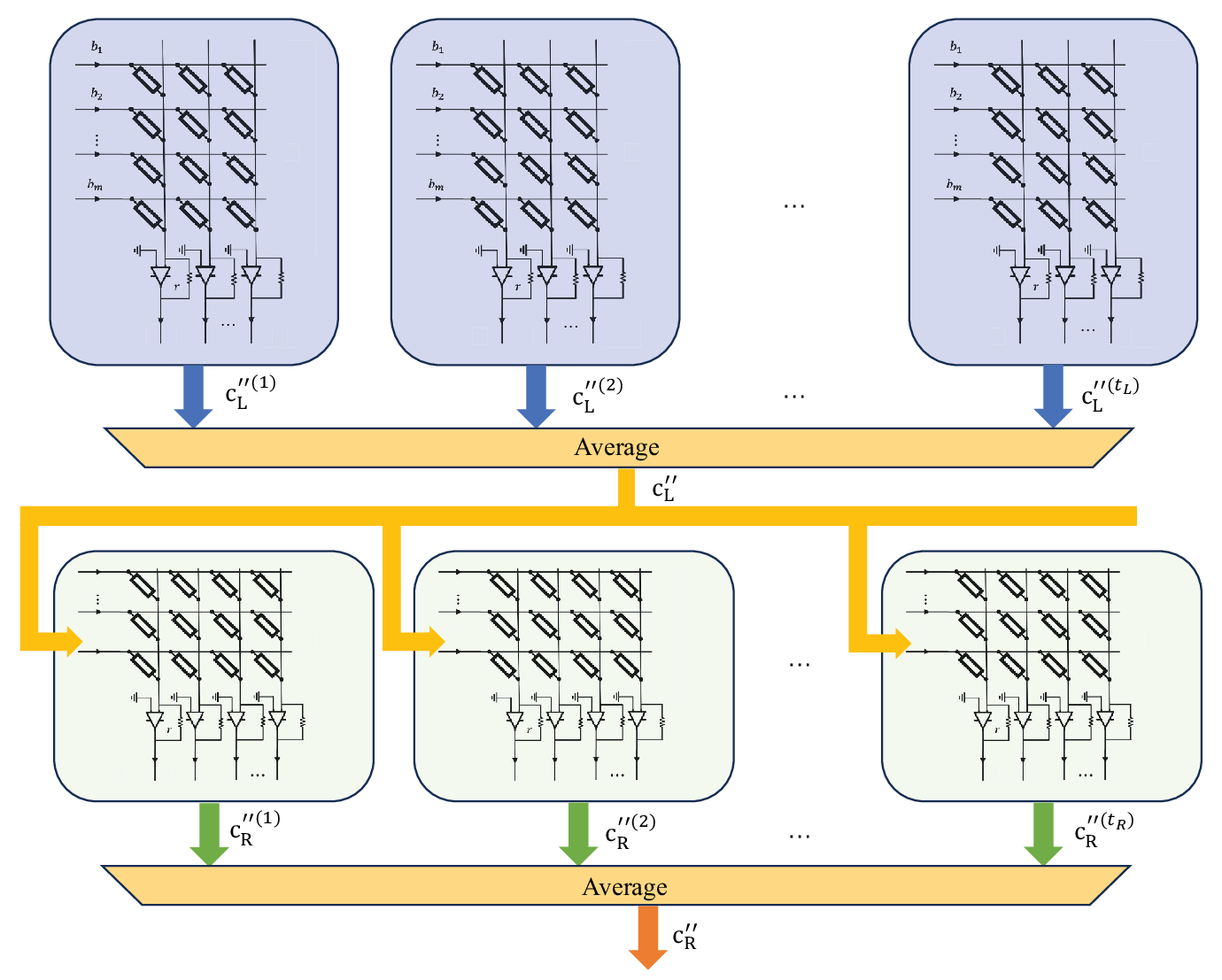}
	\vspace{-15pt}
	\caption{Serial two-step computation with memristor crossbars.\hspace{25pt}}
    \vspace{-10pt}
	\label{dot-product-model}
\end{figure}

Given a fixed budget of $mn$ memristors,
which matches the resource used by the baseline scheme described in Section~\ref{2_2BaselineScheme}.
The serial two-step computation uses $mk$ memristors for each computation in the first step and $nk$ for each computation in the second step.
As long as $mn > mk+nk$, we can repeat the computation in each of the two steps to reduce the computation error.
Let $t_\mathrm{L}, t_\mathrm{R} \in \mathbb{Z}_{+}$ denote the numbers of repetitions in the first and the second step, respectively.
They satisfy the constraint $t_{\mathrm{R}} \leq \frac{mn - mkt_{\mathrm{L}}}{nk}$, which is equivalent to $t_{\mathrm{L}} mk+ t_{\mathrm{R}} nk \leq mn$.

First, we repeat the programming of the matrix $\mathbf{L}$ across $t_{\mathrm{L}}$ distinct $m \times k$ memristor crossbar arrays.
Each repetition is affected by an independent noise realization $\mathbf{E}_{\mathrm{L}}^{(i)}$, $i \in [t_{\mathrm{L}}]$. 
The input vector $\mathbf{b}$ is converted to a voltage input.
Averaging the intermediate results of the $t_{\mathrm{L}}$ arrays, we obtain
\begin{equation}\label{stage-1}
    \mathbf{c}_{\mathrm{L}}^{\prime\prime} 
    =
    \frac{1}{t_{\mathrm{L}}} \sum_{i=1}^{t_{\mathrm{L}}} \mathbf{b} \left( \mathbf{L} + \mathbf{E}_{\mathrm{L}}^{(i)} \right).
\end{equation}

The averaged computation result from the first step is used as the input to the second step.
By repetition, the matrix $\mathbf{R}$ is programmed into $t_{\mathrm{R}}$ distinct $k \times n$ arrays.
Each matrix agian experience independent noise matrix $\mathbf{E}_{\mathrm{R}}^{(j)}$, $j \in [t_{\mathrm{R}}]$. 
The average output $\mathbf{c}^{\prime\prime}_{\mathrm{R}}$ of the second step is also the final result $\mathbf{c}^{\prime\prime}$, i.e.,
\begin{equation}\label{stage-2}
    \mathbf{c}^{\prime\prime} 
    =
    \mathbf{c}^{\prime\prime}_{\mathrm{R}} 
    = 
    \frac{1}{t_{\mathrm{R}}}
    \sum_{j=1}^{t_{\mathrm{R}}} 
    \mathbf{c}^{\prime\prime}_{\mathrm{L}} 
    \left( \mathbf{R} + \mathbf{E}_{\mathrm{R}}^{(j)} \right).
\end{equation}

\section{General error analysis}

In this section, we derive the forms of the computation error for the baseline computation scheme and the proposed approximate computation scheme, respectively.

\subsection{Error analysis for the baseline scheme}
\label{Section4_1}
Suppose the input $\mathbf{b}$ is a zero-mean random vector with covariance $\mathsf{Cov} (\mathbf{b}) = \sigma_b^2 \mathbf{I}_m$, 
and $\mathbb{E}_{\mathbf{b}} [\|\mathbf{b}\|_{\mathsf{F}}^{2}] < \infty$.
Also, suppose the noise $\mathbf{E}$ is a zero-mean random matrix, 
whose entries are i.i.d. with variance $\sigma_e^2$.
Moreover, $\mathbf{b}$ and $\mathbf{E}$ are independent with each other.

\begin{lemma} \label{lemma_2}
    Let $\mathbf{b} \in \mathbb{R}^{1 \times m}$ be a random vector with zero mean and covariance $\mathsf{Cov} (\mathbf{b}) = \sigma_b^2 \mathbf{I}_m$, 
    and let $\mathbf{M} \in \mathbb{R}^{m\times k}$ be any fixed matrix.
    Then, we have
    \begin{equation}\label{lemma_eq}
        \mathbb{E}_{\mathbf{b}} \! \left[ \left\| \mathbf{b} \mathbf{M} \right\|_\mathsf{F}^{2} \right]
        = 
        \mathsf{Tr} \! \left(\mathbf{M}\mathbf{M}^{\mathsf{T}} \, \mathbb{E} \! \left[ \mathbf{b}^{\mathsf{T}}\mathbf{b} \right] \right)
        = 
        \sigma_b^{2} \, \mathsf{Tr} \! \left( \mathbf{M}\mathbf{M}^{\mathsf{T}} \right).
    \end{equation}
\end{lemma}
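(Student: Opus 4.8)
The plan is to collapse the Frobenius norm into a scalar quadratic form in $\mathbf{b}$, move the expectation through a trace, and then read off the answer from the covariance hypothesis.

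First I would note that $\mathbf{b}\mathbf{M}\in\mathbb{R}^{1\times k}$ is a row vector, so $\left\|\mathbf{b}\mathbf{M}\right\|_{\mathsf{F}}^{2}=(\mathbf{b}\mathbf{M})(\mathbf{b}\mathbf{M})^{\mathsf{T}}=\mathbf{b}\mathbf{M}\mathbf{M}^{\mathsf{T}}\mathbf{b}^{\mathsf{T}}$ is a nonnegative scalar. Since a scalar equals its own trace and the trace is invariant under cyclic permutation of its arguments, I would rewrite this as $\mathsf{Tr}\!\left(\mathbf{b}\mathbf{M}\mathbf{M}^{\mathsf{T}}\mathbf{b}^{\mathsf{T}}\right)=\mathsf{Tr}\!\left(\mathbf{M}\mathbf{M}^{\mathsf{T}}\mathbf{b}^{\mathsf{T}}\mathbf{b}\right)$, which isolates all the randomness inside the $m\times m$ matrix $\mathbf{b}^{\mathsf{T}}\mathbf{b}$.

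Next I would take the expectation over $\mathbf{b}$. Writing the trace out as the finite sum $\sum_{i,j}(\mathbf{M}\mathbf{M}^{\mathsf{T}})_{j,i}\,b_i b_j$, the integrability assumption $\mathbb{E}_{\mathbf{b}}[\|\mathbf{b}\|_{\mathsf{F}}^{2}]<\infty$ (together with Cauchy–Schwarz on each term $\mathbb{E}[b_i b_j]$) ensures every term is finite, so expectation and the finite trace sum commute: $\mathbb{E}_{\mathbf{b}}\!\left[\mathsf{Tr}\!\left(\mathbf{M}\mathbf{M}^{\mathsf{T}}\mathbf{b}^{\mathsf{T}}\mathbf{b}\right)\right]=\mathsf{Tr}\!\left(\mathbf{M}\mathbf{M}^{\mathsf{T}}\,\mathbb{E}_{\mathbf{b}}\!\left[\mathbf{b}^{\mathsf{T}}\mathbf{b}\right]\right)$, which is the first equality in \eqref{lemma_eq}. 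Finally, I would evaluate $\mathbb{E}_{\mathbf{b}}[\mathbf{b}^{\mathsf{T}}\mathbf{b}]$ entrywise: its $(i,j)$ entry is $\mathbb{E}[b_i b_j]=\mathsf{Cov}(b_i,b_j)=\sigma_b^{2}\delta_{ij}$, because $\mathbf{b}$ has zero mean and covariance $\sigma_b^{2}\mathbf{I}_m$; hence $\mathbb{E}_{\mathbf{b}}[\mathbf{b}^{\mathsf{T}}\mathbf{b}]=\sigma_b^{2}\mathbf{I}_m$, and substituting gives $\sigma_b^{2}\,\mathsf{Tr}(\mathbf{M}\mathbf{M}^{\mathsf{T}})$, the second equality.

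I do not expect any genuine obstacle here; the statement is a direct computation. The only points worth stating carefully are the row-vector bookkeeping—so that the relevant second-moment object is the $m\times m$ matrix $\mathbf{b}^{\mathsf{T}}\mathbf{b}$ rather than the scalar $\mathbf{b}\mathbf{b}^{\mathsf{T}}$—and the (routine) justification that the expectation passes through the trace under the stated finiteness assumption.
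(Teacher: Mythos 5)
Your proof is correct and follows essentially the same route as the paper's: rewrite $\|\mathbf{b}\mathbf{M}\|_{\mathsf{F}}^{2}$ as $\mathsf{Tr}\!\left(\mathbf{M}\mathbf{M}^{\mathsf{T}}\mathbf{b}^{\mathsf{T}}\mathbf{b}\right)$ via cyclic invariance of the trace, pass the expectation through by linearity, and use $\mathbb{E}[\mathbf{b}^{\mathsf{T}}\mathbf{b}]=\mathsf{Cov}(\mathbf{b})=\sigma_b^{2}\mathbf{I}_m$. Your added remark justifying the interchange of expectation and the finite trace sum is a harmless refinement of what the paper simply calls linearity of expectation and trace.
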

\begin{proof}[Proof]
Use the trace identity $\|\mathbf{b}\mathbf{M}\|_{\mathsf{F}}^{2} = \mathsf{Tr} \! \left(\mathbf{M}^{\mathsf{T}}\mathbf{b}^{\mathsf{T}}\mathbf{b}\mathbf{M}\right)$ $= \mathsf{Tr} \! \left(\mathbf{M}\mathbf{M}^{\mathsf{T}}\mathbf{b}^{\mathsf{T}}\mathbf{b}\right)$, and the linearity of expectation and trace.
Then take the expectation, noting that $\mathbb{E}[\mathbf{b}^{\mathsf{T}}\mathbf{b}]=\mathsf{Cov}(\mathbf{b})=\sigma_b^{2}\mathbf{I}_m$, to obtain the second equality in \eqref{lemma_eq}.
\end{proof}

Based on Lemma~\ref{lemma_2}, we can now present the following theorem for the expected computation error of the baseline scheme.

\begin{theorem}\label{theorem2}
The expected computation error of the baseline scheme over input $\mathbf{b}$ and memristor noise $\mathbf{E}$ is given by
\begin{equation} \label{traditionalscheme}
    \mathcal{E}^{\prime} =
    \mathbb{E}_{\mathbf{b}, \mathbf{E}} \!
    \left[
        \left\| \mathbf{c}' - \mathbf{c} \right\|_{\mathsf{F}}^2
    \right]
    =
    \mathbb{E}_{\mathbf{b}, \mathbf{E}} \! \left[ \left\| \mathbf{b} \mathbf{E} \right\|_{\mathsf{F}}^2 \right] 
    = m \hspace{0.5pt} n \hspace{0.5pt} \sigma_e^2 \sigma_b^2,
\end{equation}
the order of which is $\mathcal{E}^{\prime} = \mathcal{O} (n^{2})$, if $m = \mathcal{O} (n)$ as $n \to \infty$.
\end{theorem}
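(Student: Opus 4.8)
The plan is to reduce the statement to a direct application of Lemma~\ref{lemma_2} followed by an elementary order-of-growth estimate. First I would observe that $\mathbf{c}' - \mathbf{c} = \mathbf{b}(\mathbf{A}+\mathbf{E}) - \mathbf{b}\mathbf{A} = \mathbf{b}\mathbf{E}$, which justifies the second equality in \eqref{traditionalscheme} and removes the matrix $\mathbf{A}$ entirely. The core of the argument is then to evaluate $\mathbb{E}_{\mathbf{b},\mathbf{E}}[\|\mathbf{b}\mathbf{E}\|_{\mathsf{F}}^2]$, and the natural route is to condition on $\mathbf{E}$: for a fixed realization of $\mathbf{E}$, Lemma~\ref{lemma_2} with $\mathbf{M} = \mathbf{E}$ gives $\mathbb{E}_{\mathbf{b}}[\|\mathbf{b}\mathbf{E}\|_{\mathsf{F}}^2] = \sigma_b^2\,\mathsf{Tr}(\mathbf{E}\mathbf{E}^{\mathsf{T}}) = \sigma_b^2\,\|\mathbf{E}\|_{\mathsf{F}}^2$. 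This step uses the independence of $\mathbf{b}$ and $\mathbf{E}$ so that the conditional expectation over $\mathbf{b}$ treats $\mathbf{E}$ as a fixed matrix.

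Next I would take the outer expectation over $\mathbf{E}$. Since $\|\mathbf{E}\|_{\mathsf{F}}^2 = \sum_{j=1}^m\sum_{k=1}^n E_{j,k}^2$ and each $E_{j,k}$ has zero mean and variance $\sigma_e^2$, linearity of expectation yields $\mathbb{E}_{\mathbf{E}}[\|\mathbf{E}\|_{\mathsf{F}}^2] = mn\,\sigma_e^2$. Combining, $\mathcal{E}' = \sigma_b^2 \cdot mn\,\sigma_e^2 = mn\,\sigma_e^2\sigma_b^2$, which is \eqref{traditionalscheme}. The order claim follows immediately: if $m = \mathcal{O}(n)$ then $mn = \mathcal{O}(n^2)$, and since $\sigma_e^2,\sigma_b^2$ are constants not depending on $n$, we get $\mathcal{E}' = \mathcal{O}(n^2)$.

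I do not anticipate a genuine obstacle here; the statement is essentially a corollary of Lemma~\ref{lemma_2}. The only point requiring a little care is the interchange of the two expectations and the use of the tower property, which is licensed by the assumed independence of $\mathbf{b}$ and $\mathbf{E}$ together with the finiteness hypothesis $\mathbb{E}_{\mathbf{b}}[\|\mathbf{b}\|_{\mathsf{F}}^2] < \infty$ and the finite variance of the noise entries (these guarantee the quantity $\mathbb{E}_{\mathbf{b},\mathbf{E}}[\|\mathbf{b}\mathbf{E}\|_{\mathsf{F}}^2]$ is finite so Fubini--Tonelli applies). An alternative that avoids conditioning is to expand $\|\mathbf{b}\mathbf{E}\|_{\mathsf{F}}^2 = \sum_{k=1}^n\big(\sum_{j=1}^m b_j E_{j,k}\big)^2$ and take the expectation term by term: cross terms vanish by independence and zero mean, leaving $\sum_{k=1}^n\sum_{j=1}^m \mathbb{E}[b_j^2]\,\mathbb{E}[E_{j,k}^2] = mn\,\sigma_b^2\sigma_e^2$. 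I would present the conditioning argument as the main line since it reuses Lemma~\ref{lemma_2} cleanly, and mention the order bound as a one-line consequence.
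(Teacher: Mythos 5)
Your proposal is correct and follows essentially the same route as the paper: condition on $\mathbf{E}$, apply Lemma~\ref{lemma_2} with $\mathbf{M}=\mathbf{E}$, take the outer expectation using $\mathbb{E}_{\mathbf{E}}[\|\mathbf{E}\|_{\mathsf{F}}^2]=mn\sigma_e^2$, and justify the interchange of expectations via Tonelli/Fubini exactly as the paper does. The order-of-growth conclusion is handled identically, so no further comment is needed.
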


\begin{proof}
Define
\begin{equation}
    f(\mathbf{b},\mathbf{E})
    \triangleq
    \|\mathbf{b}\mathbf{E}\|_{\mathsf{F}}^2
    = \mathsf{Tr} \! \left( \mathbf{E}\mathbf{E}^{\mathsf{T}} \mathbf{b}^{\mathsf{T}}\mathbf{b} \right)
     \geq 0 .
\end{equation}
By Tonelli's theorem for nonnegative measurable functions (Theorem 1.7.15 in \cite{tao2011introduction}), we may swap the order of integration since $f(\mathbf{b},\mathbf{E}) \geq 0$. The overall expectation is expressed as
\begin{align}\label{tonelli}
    \mathbb{E}_{\mathbf{b},\mathbf{E}}[f(\mathbf{b},\mathbf{E})]
    & = 
    \int_{\Omega_{\mathbf{b}}} \int_{\Omega_{\mathbf{E}}} 
    f(\mathbf{b},\mathbf{E}) 
    \mathrm{d} \mathbb{P}_{\mathbf{E}} \mathrm{d} \mathbb{P}_{\mathbf{b}}  \notag  \\
    & = 
    \int_{\Omega_{\mathbf{E}}} \int_{\Omega_{\mathbf{b}}} 
    f(\mathbf{b},\mathbf{E})
    \mathrm{d} \mathbb{P}_{\mathbf{b}} \mathrm{d} \mathbb{P}_{\mathbf{E}}.
\end{align}
Using the property of Frobenius norm, we note that
\begin{equation}
    f(\mathbf{b},\mathbf{E})=\|\mathbf{b}\mathbf{E}^{\mathsf{T}}\|_{\mathsf{F}}^2
    \leq 
    \|\mathbf{E}\|_{\mathsf{F}}^2 \|\mathbf{b}\|_{\mathsf{F}}^2 .
\end{equation}
Taking expectations and using independence, we have
\begin{equation}
    \mathbb{E}_{\mathbf{b}, \mathbf{E}} [f(\mathbf{b},\mathbf{E})]
    \leq 
    \mathbb{E}_{\mathbf{E}} \! \left[ \|\mathbf{E}\|_{\mathsf{F}}^2 \right]
    \mathbb{E}_{\mathbf{b}} \! \left[ \|\mathbf{b}\|_{\mathsf{F}}^2 \right]
    = m n \sigma_e^2\cdot m\sigma_b^2
    < \infty.
\end{equation}
Thus $f$ is integrable on the product space. By Fubini's theorem for integrable functions ((Theorem 1.7.21 in \cite{tao2011introduction}),
the iterated expectations in \eqref{tonelli} are equal to each other and to the joint expectation, and are finite.
Now, for any fixed $\mathbf{E}$, by Lemma ~\ref{lemma_2}, we obtain
\begin{align}
    \mathbb{E}_{\mathbf{b}} \! \left[ f(\mathbf{b},\mathbf{E}) \right]
    = 
    \sigma_b^{2} \mathsf{Tr} \! \left( \mathbf{E}\mathbf{E}^{\mathsf{T}} \right),
\end{align}
where $\mathbb{E}_{\mathbf{b}} [\mathbf{b}\mathbf{b}^{\mathsf{T}}] = \sigma_b^2 \mathbf{I}_m$.
Taking the outer expectation over $\mathbf{E}$,
\begin{align}
    \mathcal{E}^{\prime}
    = 
    \mathbb{E}_{\mathbf{E}} \! \left[\sigma_b^{2} \mathsf{Tr} \left(\mathbf{E}\mathbf{E}^{\mathsf{T}}\right)\right]
    = 
    \sigma_b^{2}\,\mathbb{E}_{\mathbf{E}} \! \left[\|\mathbf{E}\|_{\mathsf{F}}^{2}\right]
    = 
    m n \sigma_e^{2} \sigma_b^{2}.
\end{align}
If $m=\mathcal O(n)$, $n \to \infty$, and $\sigma_b^2,\sigma_e^2$ do not scale with $n$, then
\begin{equation}
    \mathcal E'=mn\sigma_b^2\sigma_e^2=\mathcal O(n^2).
\end{equation}
\end{proof}

\subsection{Error analysis for the approximate computation}

In addition to the assumptions for Section \ref{Section4_1}, 
we assume that $\mathbf{E}_{\mathrm{L}}^{(i)} \! \in \mathbb{R}^{m \times k}$ and $\mathbf{E}_{\mathrm{R}}^{(j)} \! \in \mathbb{R}^{k \times n}$ are zero-mean random matrices with i.i.d. entries and variances $\sigma_{\mathrm{L}}^2$ and $\sigma_{\mathrm{R}}^2$, respectively, 
for all $i \in [t_{\mathrm{L}}]$  and $j \in [t_{\mathrm{R}}]$.
Note that $\mathbf{b}$, $\mathbf{E}_{\mathrm{L}}^{i}$, and $\mathbf{E}_{\mathrm{R}}^{j}$ are independent of one another.

Combining \eqref{noiselesscomputation} and \eqref{stage-2}, the expected computation error of the proposed scheme is given in \eqref{ErrorAC}, shown at the top of next page.
\begin{figure*}[!t]
\begin{align} \label{ErrorAC}
    \mathcal{E}^{\prime\prime} 
    & = 
    \mathbb{E}_{\mathbf{b}, \mathbf{E}_{\mathrm{L}}, \mathbf{E}_{\mathrm{R}}} 
    \left[ 
        \left\| \mathbf{c}^{\prime\prime} - \mathbf{c} \right\|_{\mathsf{F}}^2 
    \right] \notag \\
    & =  
    \mathbb{E}_{\mathbf{b}, \mathbf{E}_{\mathrm{L}}, \mathbf{E}_{\mathrm{R}}} 
    \left[
        \left\|
            \mathbf{b} \left( \mathbf{A}_k - \mathbf{A} \right)
            +
            \mathbf{b} \left( \frac{1}{t_{\mathrm{L}}} \sum_{i=1}^{t_{\mathrm{L}}} \mathbf{E}_{\mathrm{L}}^{(i)} \right) \mathbf{R} 
            +
            \mathbf{b} \mathbf{L} \left( \frac{1}{t_{\mathrm{R}}} \sum_{j=1}^{t_{\mathrm{R}}} \mathbf{E}_{\mathrm{R}}^{(j)} \right)
            +
            \frac{1}{t_{\mathrm{L}} t_{\mathrm{R}}} \mathbf{b} \left( \sum_{i=1}^{t_{\mathrm{L}}} \mathbf{E}_{\mathrm{L}}^{(i)} \right) \left( \sum_{j=1}^{t_{\mathrm{R}}} \mathbf{E}_{\mathrm{R}}^{(j)} \right)
        \right\|_{\mathsf{F}}^2
    \right].
\end{align}
\vspace{-5pt}
\hrule
\vspace{-10pt}
\end{figure*}
We present the following theorem to analyze the approximate scheme computation error.

\begin{theorem}
    The expected computation error of the proposed approximate computation scheme is 
    \begin{align} \label{approximate_error}
        \mathcal{E}^{\prime\prime}
        & = \\
        & \;
        \sigma_b^2 
        \left(
            \sum_{i=k+1}^r \sigma_i^2
            + 
            \left( \frac{m \sigma_{\mathrm{L}}^2}{t_{\mathrm{L}}} + \frac{n \sigma_{\mathrm{R}}^2}{t_{\mathrm{R}}}\right) \mathsf{Tr} \! \left(\mathbf{\Sigma}_k\right) + \frac{m k n \sigma_{\mathrm{L}}^2 \sigma_{\mathrm{R}}^2}{t_{\mathrm{L}} t_{\mathrm{R}}}
        \right). \notag
    \end{align}
\end{theorem}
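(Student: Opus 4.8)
The plan is to expand the squared Frobenius norm in \eqref{ErrorAC} and show that, in expectation, only the four ``diagonal'' squared-norm terms survive. Write the four summands inside the norm as $T_1 = \mathbf{b}(\mathbf{A}_k - \mathbf{A})$, $T_2 = \mathbf{b}\bar{\mathbf{E}}_{\mathrm{L}}\mathbf{R}$ with $\bar{\mathbf{E}}_{\mathrm{L}} \triangleq t_{\mathrm{L}}^{-1}\sum_{i}\mathbf{E}_{\mathrm{L}}^{(i)}$, $T_3 = \mathbf{b}\mathbf{L}\bar{\mathbf{E}}_{\mathrm{R}}$ with $\bar{\mathbf{E}}_{\mathrm{R}} \triangleq t_{\mathrm{R}}^{-1}\sum_{j}\mathbf{E}_{\mathrm{R}}^{(j)}$, and $T_4 \triangleq (t_{\mathrm{L}}t_{\mathrm{R}})^{-1}\mathbf{b}\bigl(\sum_i\mathbf{E}_{\mathrm{L}}^{(i)}\bigr)\bigl(\sum_j\mathbf{E}_{\mathrm{R}}^{(j)}\bigr)$, so that $\mathbf{c}''-\mathbf{c} = T_1+T_2+T_3+T_4$ and $\|\mathbf{c}''-\mathbf{c}\|_{\mathsf{F}}^2 = \sum_{a=1}^4\|T_a\|_{\mathsf{F}}^2 + 2\sum_{1\le a<b\le 4}\langle T_a,T_b\rangle_{\mathsf{F}}$. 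As in the proof of Theorem~\ref{theorem2}, nonnegativity (Tonelli) together with the finite-second-moment hypotheses (Fubini) justifies the interchanges of expectation. Each of the six cross terms then vanishes in expectation, by conditioning on a suitable sub-collection of the random objects and using that the remaining independent noise factor is zero-mean: $\mathbb{E}[T_2\mid\mathbf{b}]=\mathbf{0}$ and $\mathbb{E}[T_3\mid\mathbf{b}]=\mathbf{0}$ handle the pairs within $\{T_1,T_2,T_3\}$, while conditioning on $(\mathbf{b},\mathbf{E}_{\mathrm{L}})$ or on $(\mathbf{b},\mathbf{E}_{\mathrm{R}})$ and using $\mathbb{E}[\sum_j\mathbf{E}_{\mathrm{R}}^{(j)}]=\mathbf{0}$ (resp.\ $\mathbb{E}[\sum_i\mathbf{E}_{\mathrm{L}}^{(i)}]=\mathbf{0}$) handles every pair involving $T_4$.

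For the diagonal terms I would condition on the noise and invoke Lemma~\ref{lemma_2}. The $T_1$ term gives $\mathbb{E}[\|T_1\|_{\mathsf{F}}^2] = \sigma_b^2\|\mathbf{A}-\mathbf{A}_k\|_{\mathsf{F}}^2 = \sigma_b^2\sum_{i=k+1}^r\sigma_i^2$ by the Eckart--Young characterization of the rank-$k$ truncation error. For $T_2$, conditioning on $\mathbf{E}_{\mathrm{L}}$ and applying Lemma~\ref{lemma_2} with $\mathbf{M}=\bar{\mathbf{E}}_{\mathrm{L}}\mathbf{R}$ yields $\sigma_b^2\,\mathbb{E}_{\mathbf{E}_{\mathrm{L}}}[\mathsf{Tr}(\bar{\mathbf{E}}_{\mathrm{L}}\mathbf{R}\mathbf{R}^{\mathsf{T}}\bar{\mathbf{E}}_{\mathrm{L}}^{\mathsf{T}})]$; since the entries of $\bar{\mathbf{E}}_{\mathrm{L}}$ are uncorrelated with variance $\sigma_{\mathrm{L}}^2/t_{\mathrm{L}}$ (an average of $t_{\mathrm{L}}$ independent copies), an index computation gives $(m\sigma_{\mathrm{L}}^2/t_{\mathrm{L}})\,\mathsf{Tr}(\mathbf{R}\mathbf{R}^{\mathsf{T}})$. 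Symmetrically, for $T_3$ one conditions on $\mathbf{E}_{\mathrm{R}}$, rewrites $\mathsf{Tr}(\mathbf{L}\bar{\mathbf{E}}_{\mathrm{R}}\bar{\mathbf{E}}_{\mathrm{R}}^{\mathsf{T}}\mathbf{L}^{\mathsf{T}}) = \mathsf{Tr}(\mathbf{L}^{\mathsf{T}}\mathbf{L}\,\bar{\mathbf{E}}_{\mathrm{R}}\bar{\mathbf{E}}_{\mathrm{R}}^{\mathsf{T}})$, and uses $\mathbb{E}[\bar{\mathbf{E}}_{\mathrm{R}}\bar{\mathbf{E}}_{\mathrm{R}}^{\mathsf{T}}] = (n\sigma_{\mathrm{R}}^2/t_{\mathrm{R}})\mathbf{I}_k$ to obtain $(n\sigma_{\mathrm{R}}^2/t_{\mathrm{R}})\,\mathsf{Tr}(\mathbf{L}^{\mathsf{T}}\mathbf{L})$. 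Because $\mathbf{L}=\mathbf{U}_k\mathbf{\Sigma}_k^{1/2}$ and $\mathbf{R}=\mathbf{\Sigma}_k^{1/2}\mathbf{V}_k^{\mathsf{T}}$ with $\mathbf{U}_k^{\mathsf{T}}\mathbf{U}_k=\mathbf{V}_k^{\mathsf{T}}\mathbf{V}_k=\mathbf{I}_k$, both $\mathbf{R}\mathbf{R}^{\mathsf{T}}$ and $\mathbf{L}^{\mathsf{T}}\mathbf{L}$ equal $\mathbf{\Sigma}_k$, so these two contributions combine to $\sigma_b^2\bigl(m\sigma_{\mathrm{L}}^2/t_{\mathrm{L}}+n\sigma_{\mathrm{R}}^2/t_{\mathrm{R}}\bigr)\mathsf{Tr}(\mathbf{\Sigma}_k)$.

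The term $T_4$ is the delicate one, being bilinear in two independent noise matrices; here I would use nested conditioning, integrating one noise factor at a time so that no moment beyond the second is ever required. Let $\mathbf{S}_{\mathrm{L}}\triangleq\sum_i\mathbf{E}_{\mathrm{L}}^{(i)}$ and $\mathbf{S}_{\mathrm{R}}\triangleq\sum_j\mathbf{E}_{\mathrm{R}}^{(j)}$. Conditioning on $(\mathbf{S}_{\mathrm{L}},\mathbf{S}_{\mathrm{R}})$ and applying Lemma~\ref{lemma_2} with $\mathbf{M}=\mathbf{S}_{\mathrm{L}}\mathbf{S}_{\mathrm{R}}$ gives $\mathbb{E}[\|T_4\|_{\mathsf{F}}^2] = (t_{\mathrm{L}}t_{\mathrm{R}})^{-2}\sigma_b^2\,\mathbb{E}[\mathsf{Tr}(\mathbf{S}_{\mathrm{L}}\mathbf{S}_{\mathrm{R}}\mathbf{S}_{\mathrm{R}}^{\mathsf{T}}\mathbf{S}_{\mathrm{L}}^{\mathsf{T}})]$; taking the expectation over $\mathbf{S}_{\mathrm{R}}$ (independent of $\mathbf{S}_{\mathrm{L}}$), with $\mathbb{E}[\mathbf{S}_{\mathrm{R}}\mathbf{S}_{\mathrm{R}}^{\mathsf{T}}]=n t_{\mathrm{R}}\sigma_{\mathrm{R}}^2\mathbf{I}_k$, reduces this to $n t_{\mathrm{R}}\sigma_{\mathrm{R}}^2\,\mathbb{E}[\|\mathbf{S}_{\mathrm{L}}\|_{\mathsf{F}}^2] = n t_{\mathrm{R}}\sigma_{\mathrm{R}}^2\cdot mk\,t_{\mathrm{L}}\sigma_{\mathrm{L}}^2$, whence $\mathbb{E}[\|T_4\|_{\mathsf{F}}^2] = \sigma_b^2\,mkn\sigma_{\mathrm{L}}^2\sigma_{\mathrm{R}}^2/(t_{\mathrm{L}}t_{\mathrm{R}})$. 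Adding the four diagonal contributions produces \eqref{approximate_error}. I expect the main obstacle to be organizational rather than technical: keeping track of which cross terms vanish and under which conditioning, and choosing the order in which to integrate the two noise matrices in $T_4$. Once that bookkeeping is fixed, everything reduces to Lemma~\ref{lemma_2} and the second-moment structure of the averaged and summed noise.
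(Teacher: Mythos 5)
Your proposal is correct and follows essentially the same route as the paper: decompose $\mathbf{c}''-\mathbf{c}$ into the truncation term, the two single-noise terms, and the bilinear noise term, kill the cross terms in expectation, evaluate each squared norm via Lemma~\ref{lemma_2} and the second-moment structure of the averaged noise, and finish with $\mathbf{R}\mathbf{R}^{\mathsf{T}}=\mathbf{L}^{\mathsf{T}}\mathbf{L}=\mathbf{\Sigma}_k$. Your conditioning argument for the cross terms involving $T_4$ (which share a noise factor with $T_2$ or $T_3$, so they are not literally independent) is in fact slightly more careful than the paper's blanket independence claim, but it leads to the same conclusion and the same final expression.
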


\begin{proof}[Proof]
    Since $\mathbf{E}_{\mathrm{L}}^{(i)}$ and $\mathbf{E}_{\mathrm{R}}^{(j)}$ are  random matrices whose entries are i.i.d.with zero mean and variances $\sigma_{\mathrm{L}}^2$ and $\sigma_{\mathrm{R}}^2$, respectively,
    the averaged noise matrices are
    \begin{equation} \label{eq:average_e}
        \bar{\mathbf{E}}_{\mathrm{L}} 
        \triangleq
        \frac{1}{t_{\mathrm{L}}} \sum_{i=1}^{t_{\mathrm{L}}} \mathbf{E}_{\mathrm{L}}^{(i)}, 
        \; \;
        \bar{\mathbf{E}}_{\mathrm{R}} 
        \triangleq 
        \frac{1}{t_{\mathrm{R}}} \sum_{j=1}^{t_{\mathrm{R}}} \mathbf{E}_{\mathrm{R}}^{(j)},
    \end{equation}
    Their entries are i.i.d. with zero mean and reduced variances ${\sigma_{\mathrm{L}}^2}/{t_{\mathrm{L}}}$ and ${\sigma_{\mathrm{R}}^2}/{t_{\mathrm{R}}}$, respectively.

    Then, the computation error can be expressed as the sum of four terms as follows:
    \begin{equation}
        \mathbf{c}^{\prime\prime} - \mathbf{c}
        = 
        \underbrace{\mathbf{b} \left( \mathbf{A}_k - \mathbf{A} \right)}_{\mathbf{c}_1} 
        + 
        \underbrace{\mathbf{b} \bar{\mathbf{E}}_{\mathrm{L}} \mathbf{R}}_{\mathbf{c}_2} 
        + 
        \underbrace{\mathbf{b} \mathbf{L} \bar{\mathbf{E}}_{\mathrm{R}}}_{\mathbf{c}_3} 
        + 
        \underbrace{\mathbf{b} \left( \bar{\mathbf{E}}_{\mathrm{L}} \bar{\mathbf{E}}_{\mathrm{R}} \right)}_{\mathbf{c}_4}.
    \end{equation}
    If we expand $\| \mathbf{c}^{\prime\prime}- \mathbf{c} \|_{\mathsf{F}}^2$, then the cross-terms have the form $\left< \mathbf{c}_i, \mathbf{c}_j \right> = \mathbf{b} \mathbf{M}_i \mathbf{M}_j^{\mathsf{T}} \mathbf{b}^{\mathsf{T}}$, and
    $\mathbf{M}_1 \triangleq \mathbf{A}_k-\mathbf{A}$, $\mathbf{M}_2 \triangleq \bar{\mathbf{E}}_u \mathbf{R}$, $\mathbf{M}_3 \triangleq \mathbf{L} \bar{\mathbf{E}}_v$, and $\mathbf{M}_4 \triangleq \bar{\mathbf{E}}_u \bar{\mathbf{E}}_v$.
    The conditional expectation is $\mathbb{E}_{\mathbf{b}} \left[ \left< \mathbf{c}_i, \mathbf{c}_j \right> \mathop{|} \mathbf{M}_i, \mathbf{M}_j \right] = \sigma_b^2 \mathsf{Tr}(\mathbf{M}_i \mathbf{M}_j^{\mathsf{T}})$.
    Since $\mathbf{M}_i$ and $\mathbf{M}_j$ are independent, the expectation of cross terms are all zero values, i.e., $\mathbb{E}_{\mathbf{b}, \mathbf{M}_i, \mathbf{M}_j} [ \langle \mathbf{c}_i, \mathbf{c}_j \rangle ] = 0.$
    Hence, the error $\mathcal{E}^{\prime\prime}$ consists of four independent terms as
    \begin{equation} \label{eq:sum_terms}
        \mathcal{E}^{\prime\prime} 
        = 
        \mathbb{E}_{\mathbf{b}, \mathbf{E}_{\mathrm{L}}, \mathbf{E}_{\mathrm{R}}}
        \Big[
            \left\| \mathbf{c}_1 \right\|^2_{\mathsf{F}}
            + 
            \left\| \mathbf{c}_2 \right\|^2_{\mathsf{F}}
            + 
            \left\| \mathbf{c}_3 \right\|^2_{\mathsf{F}}
            + 
            \left\| \mathbf{c}_4 \right\|^2_{\mathsf{F}}
        \Big].
    \end{equation}
    In the rest of proof, the subscripts under $\mathbb{E}$ are omitted for simplicity.
    Based on Lemma~\ref{lemma_2}, the first three terms are
    \begin{align} 
        \mathbb{E} \left[ \left\| \mathbf{c}_1 \right\|^2_{\mathsf{F}} \right]
        & = 
        \sigma_b^2 
        \left\| \mathbf{A} - \mathbf{A}_k \right\|_{\mathsf{F}}^2 = \sigma_b^2 \sum_{i=k+1}^r \sigma_i^2. \label{eq:T1_final} \\ 
        \mathbb{E} \left[ \left\| \mathbf{c}_2 \right\|^2_{\mathsf{F}} \right]
        & = 
        \sigma_b^2 \, \mathsf{Tr} \! \left( \mathbf{R}^{\mathsf{T}} 
        \mathbb{E} \! \left[ \bar{\mathbf{E}}_{\mathrm{L}}^{\mathsf{T}} \bar{\mathbf{E}}_{\mathrm{L}} \right] \mathbf{R} \right) 
        = \sigma_b^2 \frac{m \sigma_{\mathrm{L}}^2}{t_{\mathrm{L}}} \| \mathbf{R} \|_\mathsf{F}^2 .\label{eq:T2_final}  \\ 
        \mathbb{E} \left[ \left\| \mathbf{c}_3 \right\|^2_{\mathsf{F}} \right]
        & = \sigma_b^2 \, \mathsf{Tr} \! \left( \mathbf{L}^{\mathsf{T}} \mathbb{E} \! \left[ \bar{\mathbf{E}}_{\mathrm{R}}^{\mathsf{T}} \bar{\mathbf{E}}_{\mathrm{R}} \right] \mathbf{L} \right) 
        = \sigma_b^2 \frac{n \sigma_{\mathrm{R}}^2}{t_{\mathrm{R}}} \| \mathbf{L} \|_\mathsf{F}^2. \label{eq:T3_final}
    \end{align}
    The first term is the low-rank truncation error.
    The fourth term is the accumulated noise.
    Similar to the proof of Theorem ~\ref{theorem2}, it can be expressed as
    \begin{equation} \label{eq:T4_final}
    \begin{aligned}
        \mathbb{E} \left[ \left\| \mathbf{c}_4 \right\|^2_{\mathsf{F}} \right]  
        & = 
        \sigma_b^2 \, \mathbb{E}   \left[ \left\| \bar{\mathbf{E}}_{\mathrm{L}} \bar{\mathbf{E}}_{\mathrm{R}} \right\|_\mathsf{F}^2 \right] \\
        & = \sigma_b^2 \, \mathbb{E} \! \left[ \mathsf{Tr} \left( \bar{\mathbf{E}}_{\mathrm{L}} \bar{\mathbf{E}}_{\mathrm{R}} \bar{\mathbf{E}}_{\mathrm{R}}^{\mathsf{T}} \bar{\mathbf{E}}_{\mathrm{L}}^{\mathsf{T}} \right) \right] \\
        & = \sigma_b^2 \, \mathsf{Tr} \left( \mathbb{E}_{\bar{\mathbf{E}}_{\mathrm{L}}} \! \left[ \bar{\mathbf{E}}_{\mathrm{L}}^{\mathsf{T}} \bar{\mathbf{E}}_{\mathrm{L}} \right] 
        \mathbb{E}_{\bar{\mathbf{E}}_{\mathrm{R}}} \! \left[ \bar{\mathbf{E}}_{\mathrm{R}} \bar{\mathbf{E}}_{\mathrm{R}}^{\mathsf{T}} \right] \right) \\
        & = \sigma_b^2  \mathsf{Tr} \left( \frac{m \sigma_{\mathrm{L}}^2}{t_{\mathrm{L}}} \mathbf{I}_k \frac{n \sigma_{\mathrm{R}}^2}{t_{\mathrm{R}}} \mathbf{I}_k \right) \\
        & = \sigma_b^2 \frac{n \sigma_{\mathrm{R}}^2}{t_{\mathrm{R}}} \frac{m k \sigma_{\mathrm{L}}^2}{t_{\mathrm{L}}}.
    \end{aligned}
    \end{equation}
    Using the fact that $\|\mathbf{L}\|_\mathsf{F}^2 = \|\mathbf{R}\|_\mathsf{F}^2 = \mathsf{Tr}(\mathbf{\Sigma}_k)$,
    and substituting  \eqref{eq:T1_final}, \eqref{eq:T2_final}, \eqref{eq:T3_final} and \eqref{eq:T4_final} into \eqref{eq:sum_terms}, 
    we obtain the final expression \eqref{approximate_error}.
\end{proof}

Although the rank-$k$ approximation introduces an extra truncation error, 
it shrinks the crossbar arrays involved from size $m\times n$ to $m\times k$ and $n\times k$.
The stochastic error, which grows with the number of memristors used, drops substantially.
The two-step small-size multiplication also enables repetitions and averaging, reducing variance by $1/t_{\mathrm{L}}$ and $1/t_{\mathrm{R}}$.
By choosing $k$, $t_{\mathrm{L}}$, and $t_{\mathrm{R}}$ to balance truncation and variance, we can improve the performance vis-a-vis the baseline.

\section{Asymptotic analysis for a particular singular-value profile}

\subsection{Asymptotic error analysis}

We consider a class of rank-$r$  matrices, denoted by $\mathcal{A}_h (\lambda)$.
The class consists of matrices whose singular values follow a harmonic decay profile with the parameter $\lambda$.
Formally, $\mathcal{A}_h (\lambda) = \left\{ \mathbf{A} \in \mathbb{R}^{m \times n} : \mathsf{rank}(\mathbf{A}) = r, \sigma_i(\mathbf{A}) = \tfrac{\lambda}{i}, i \in [r] \right\}$.

For any $\mathbf{A}\in\mathcal{A}_h(\lambda_0)$ with $\lambda_0 \le \frac{\sqrt{6mn\,\rho}}{\pi r_{\mathrm{T}}}$, the total squared magnitude of the programmed values is
\begin{align} \label{Apowersum}
    \sum_{i = 1}^{m} \sum_{j=1}^{n} g_{ij}^2 
    & = 
    r_{\mathrm{T}}^2 \, \mathsf{Tr} \! \left( \mathbf{U} \mathbf{\Sigma} \mathbf{V}^\mathsf{T} \mathbf{V}  \mathbf{\Sigma} \mathbf{U}^\mathsf{T} \right) \notag \\
    & = 
    r_{\mathrm{T}}^2 \sum_{i = 1}^{r} \frac{1}{i^2} \lambda_0^2
    \leq 
    r_{\mathrm{T}}^2 \sum_{i = 1}^{\infty} \frac{1}{i^2} \lambda_0^2 
    = 
    \frac{\pi^2}{6} \lambda_0^2 r_{\mathrm{T}}^2 
    \leq mn\rho,
\end{align}
where the first inequality is based on the result of Basel problem \cite{havil2010gamma}.
The second inequality of \eqref{Apowersum} ensures that $\mathbf{A}$ satisfies the total magnitude constraint in \eqref{powerconstraint}.

\begin{theorem} \label{Theorem_04}
Assume that $m = n$, $n \to \infty$, the approximation rank $k$ and the rank $r$ take the form of $k = c_1 \cdot r^{\beta}$, $(0<c_1 \leq 1, 0 < \beta \leq 1)$, and $r = c_2 \cdot n^{\alpha}$, $(0<c_2 \leq1, 0<\alpha\leq1)$. 
The asymptotic expression of approximate computation error is given by (\ref{asymptotic_error}).
\begin{figure*}[!t]
\begin{align} \label{asymptotic_error}
    \mathcal{E}^{\prime\prime}
    & \leq
    \sigma_b^2 \left[ 
        \lambda^2 
            \left(\frac{1}{c_1} \frac{1}{c_2^{\beta} n^{\alpha \beta}} - \frac{1}{c_2 n^{\alpha}}\right)
     + 4c_1  c_2^{\beta} \lambda \left(\sigma_{\mathrm{L}}^2 
     +
     \sigma_{\mathrm{R}}^2 \right) n^{\alpha \beta} \left( \alpha \beta \ln n + \frac{1}{2c_1  c_2^{\beta} n^{\alpha \beta}} 
     + \gamma_{\mathrm{EM}} \right)
     + 4 c_1^3  c_2^{3\beta} n^{3 \alpha \beta} \sigma_{\mathrm{L}}^2 \sigma_{\mathrm{R}}^2 \right] .
\end{align}
\hrule
\vspace{-10pt}
\end{figure*}
\end{theorem}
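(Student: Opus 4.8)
The plan is to start from the exact identity \eqref{approximate_error} for $\mathcal{E}^{\prime\prime}$ proved in the preceding theorem and specialise it to the harmonic profile $\sigma_i = \lambda/i$ under the scalings $m=n$, $k = c_1 r^{\beta}$, $r = c_2 n^{\alpha}$. With $m=n$ the memristor budget $t_{\mathrm L}mk + t_{\mathrm R}nk \le mn$ reduces to $(t_{\mathrm L}+t_{\mathrm R})k \le n$, so I would balance the two serial stages by choosing $t_{\mathrm L}$ and $t_{\mathrm R}$ as large and as equal as this allows, namely $t_{\mathrm L} = t_{\mathrm R} = n/(2k)$ (taking this to be an integer; otherwise replace it by the nearest feasible integer and absorb the resulting $1+o(1)$ factors, which is legitimate since $k = c_1 c_2^{\beta}n^{\alpha\beta} = o(n)$ forces $n/(2k)\to\infty$ away from the degenerate case $\alpha=\beta=1$). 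This choice is admissible and yields $1/t_{\mathrm L},\,1/t_{\mathrm R} \le 2k/n$ and $1/(t_{\mathrm L}t_{\mathrm R}) \le 4k^2/n^2$, which control every repetition factor appearing in \eqref{approximate_error}.

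I would then bound the three summands of \eqref{approximate_error} one at a time. For the truncation term, comparing the sum with an integral gives $\sum_{i=k+1}^{r}\sigma_i^2 = \lambda^2\sum_{i=k+1}^{r} i^{-2} \le \lambda^2\int_{k}^{r} x^{-2}\,\mathrm{d}x = \lambda^2\bigl(\tfrac1k - \tfrac1r\bigr)$, and substituting $k = c_1 c_2^{\beta} n^{\alpha\beta}$, $r = c_2 n^{\alpha}$ produces the first bracketed term of \eqref{asymptotic_error}. For the linear noise term I would use $\mathsf{Tr}(\mathbf{\Sigma}_k) = \lambda H_k$ together with the standard harmonic-number estimate $H_k \le \ln k + \gamma_{\mathrm{EM}} + 1/(2k)$; since $c_1,c_2 \le 1$ one has $\ln k = \alpha\beta\ln n + \ln(c_1 c_2^{\beta}) \le \alpha\beta\ln n$, hence $\mathsf{Tr}(\mathbf{\Sigma}_k) \le \lambda\bigl(\alpha\beta\ln n + \gamma_{\mathrm{EM}} + \tfrac{1}{2c_1 c_2^{\beta}n^{\alpha\beta}}\bigr)$, and multiplying by $\bigl(\tfrac{n\sigma_{\mathrm L}^2}{t_{\mathrm L}} + \tfrac{n\sigma_{\mathrm R}^2}{t_{\mathrm R}}\bigr) \le 4k(\sigma_{\mathrm L}^2+\sigma_{\mathrm R}^2)$ gives the middle term. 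For the quadratic noise term, $mkn/(t_{\mathrm L}t_{\mathrm R}) \le n^{2} k \cdot 4k^2/n^{2} = 4k^3$ and $k^3 = c_1^3 c_2^{3\beta}n^{3\alpha\beta}$ give the last term. Adding the three bounds and factoring out $\sigma_b^2$ yields \eqref{asymptotic_error}.

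The substantive content is just the choice of $(t_{\mathrm L}, t_{\mathrm R})$ and the two elementary estimates — $\sum i^{-2}$ by an integral and $H_k$ by its Euler--Mascheroni expansion — so the main obstacle I anticipate is bookkeeping: making the floor in the feasible choice of $t_{\mathrm L}=t_{\mathrm R}$ interact cleanly with the displayed constants, which I expect to handle by restricting to $n$ large enough (so that $n/(2k)$ is already large) and absorbing the lower-order corrections into the inequality in \eqref{asymptotic_error}, reading the borderline case $\alpha=\beta=1$ accordingly. I would also remark that no cross terms arise here, since they were already shown to vanish in the proof of \eqref{approximate_error} by the mutual independence of $\mathbf{b}$, $\mathbf{E}_{\mathrm L}$ and $\mathbf{E}_{\mathrm R}$, so only the three nonnegative contributions of \eqref{approximate_error} need to be estimated.
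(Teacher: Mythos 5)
Your proposal is correct and follows essentially the same route as the paper: the integral comparison $\sum_{i=k+1}^{r} i^{-2} \le \tfrac1k - \tfrac1r$ for the truncation term, the Euler--Mascheroni bound $H_k \le \ln k + \gamma_{\mathrm{EM}} + \tfrac1{2k}$ for $\mathsf{Tr}(\mathbf{\Sigma}_k)$, and substitution of $k=c_1 r^{\beta}$, $r=c_2 n^{\alpha}$, $m=n$ into \eqref{approximate_error}. The one place you go beyond the paper's written proof is in spelling out the budget-saturating choice $t_{\mathrm L}=t_{\mathrm R}\approx n/(2k)$ (with the floor absorbed asymptotically), which the paper leaves implicit but which is exactly what is needed to eliminate $t_{\mathrm L},t_{\mathrm R}$ and produce the constants in \eqref{asymptotic_error}.
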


\begin{proof}  
    When $k \to \infty, r\to \infty$, the asymptotic expression of the low-rank truncation error is
    \begin{align} \label{Tail_Square_sum}  
         S_{\mathrm{t}}
         & \triangleq 
        \sum_{i=k+1}^r \sigma_i^2 
        = 
        \sum_{i=k+1}^{r} {\frac{\lambda^2}{i^2}} \leq \lambda^2 \sum_{i=k}^{r-1} \int_i^{i+1} \frac{dt}{t^2} 
        = 
        \lambda^2 \int_k^{r} \frac{dt}{t^2} \notag \\
        & = 
        \lambda^2 \left(  \frac{1}{k} - \frac{1}{r} \right) =  \lambda^2 \left(  \frac{1}{c_1r^{\beta}} - \frac{1}{r} \right)  .
    \end{align}
    Since $ \mathsf{Tr} \! \left( \mathbf{\Sigma}_k \right)$ takes the form of a partial sum of the harmonic series, its expansion is given by \cite{Boas01101971}
    \begin{equation} \label{TopSum}
        S_{\mathrm{h}}
        \triangleq
        \mathrm{Tr} \! \left(\boldsymbol{\Sigma}_k\right) 
        = 
        \sum_{i=1}^k \frac{\lambda}{i}
        \leq 
        \lambda \left(\ln k + \gamma_{\mathrm{EM}} + \frac{1}{2k}\right).
    \end{equation}
    where $\gamma_{\mathrm{EM}} \approx 0.5772$ is the Euler-Mascheroni constant. 
    Substituting $k = c_1 \cdot r^{\beta}$ and $r = c_2 \cdot n^{\alpha}$ into (\ref{Tail_Square_sum}) and (\ref{TopSum}), and plugging these two equations into (\ref{approximate_error}) yields the bound in (\ref{asymptotic_error}).
    In this step, we also invoke the assumption $m=n$ from the theorem statement to eliminate the dependence on $m$ and obtain the expression purely in terms of $n$.
\end{proof}

Based on above theorem, we give the optimal low-rank $k$ that minimizes the asymptotic error in the following proposition.

\begin{proposition} \label{Proposition_5}
    Assume that $m=n$, $n \to \infty$. 
    Consider any matrix $\mathbf{A} \in \mathcal{A}_h (\lambda_0)$ with  $\lambda_0 \leq  \frac{\sqrt{6mn \rho}}{ \, \pi r_{\mathrm{T}}}$ and rank $r = c_2 n^{\alpha}$, $(0<c_2 \leq1)$,
    for some  fixed $\alpha \in (0, 1]$.
    Then, the optimal low-rank parameter of the proposed scheme is
    \begin{equation}
        \beta^{\star} 
        =
        \min \left\{1, \tfrac{1}{2 \alpha} \right\},
    \end{equation}
    and the minimized scaling of the asymptotic computation error is
\begin{equation}\label{minerrorasym}
        \mathcal{E}^{\prime\prime}_{\min}
        =
        \begin{cases}
        \mathcal{O} \hspace{0.5pt} ( n^{2-\alpha} ),    & \alpha \in (0, \frac{1}{2}), \\
        \mathcal{O} \hspace{0.5pt} ( n^{\frac{3}{2}} ), & \alpha \in [\frac{1}{2}, 1].
        \end{cases}
    \end{equation}
\end{proposition}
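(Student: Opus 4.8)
\emph{Proof proposal.} The plan is to view the right-hand side of \eqref{asymptotic_error} as a function of the single free parameter $\beta\in(0,1]$ (with $\alpha$ and all device/statistical constants held fixed), reduce it to a sum of powers of $n$, and minimise the dominant exponent. First I would use the magnitude constraint to size $\lambda$: with $m=n$ the admissible parameter obeys $\lambda_0\le \sqrt{6\rho}\,n/(\pi r_{\mathrm{T}})$, so for the extremal matrices in $\mathcal{A}_h(\lambda_0)$ one has $\lambda=\Theta(n)$, while $\sigma_b,\sigma_{\mathrm{L}},\sigma_{\mathrm{R}},\rho,r_{\mathrm{T}},c_1,c_2,\gamma_{\mathrm{EM}}=\Theta(1)$. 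Substituting $\lambda^2=\Theta(n^2)$ and $\lambda=\Theta(n)$ into the three summands of \eqref{asymptotic_error}, and discarding the negative $-1/(c_2 n^{\alpha})$ piece of the truncation term (strictly lower order since $\beta\le 1$), the bound becomes $\Theta\!\big(n^{g(\beta)}\big)$ up to one factor of $\ln n$, where
\[
    g(\beta)=\max\{\, 2-\alpha\beta,\ \ 1+\alpha\beta,\ \ 3\alpha\beta \,\},
\]
the three arguments coming respectively from the rank-$k$ truncation error, the single-step averaged noise ($\mathbf{c}_2$ and $\mathbf{c}_3$), and the accumulated cross-noise ($\mathbf{c}_4$).

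Next I would minimise $g$ on $(0,1]$. The convenient feature is that all three pairwise equalities $2-\alpha\beta=1+\alpha\beta$, $1+\alpha\beta=3\alpha\beta$, and $2-\alpha\beta=3\alpha\beta$ hold at the single point $\beta=\tfrac{1}{2\alpha}$. Hence $g(\beta)=2-\alpha\beta$ (strictly decreasing) on $(0,\tfrac{1}{2\alpha}]$ and $g(\beta)=3\alpha\beta$ (strictly increasing) on $[\tfrac{1}{2\alpha},\infty)$, so the unconstrained minimiser is $\beta=\tfrac{1}{2\alpha}$ with $g=\tfrac{3}{2}$. Imposing $\beta\le 1$ splits into two cases: if $\alpha\ge\tfrac{1}{2}$ then $\tfrac{1}{2\alpha}\le 1$ is feasible, giving $\beta^\star=\tfrac{1}{2\alpha}$ and $g(\beta^\star)=\tfrac{3}{2}$; if $\alpha<\tfrac{1}{2}$ then $(0,1]\subset(0,\tfrac{1}{2\alpha})$, $g$ is decreasing throughout, and $\beta^\star=1$ with $g(1)=2-\alpha$. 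This is exactly $\beta^\star=\min\{1,\tfrac{1}{2\alpha}\}$, and reading off $n^{g(\beta^\star)}$ in the two regimes yields \eqref{minerrorasym}; since $g(\beta^\star)<2$ in both cases, the scheme strictly improves on the $\mathcal{O}(n^2)$ baseline of Theorem~\ref{theorem2}.

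The delicate points are bookkeeping rather than genuine obstacles. (i) At $\beta^\star=\tfrac{1}{2\alpha}$ the single-step noise term is really $\Theta(n^{3/2}\ln n)$, so \eqref{minerrorasym} must be read as the polynomial growth rate; the log cannot be shed by perturbing $\beta$, because decreasing $\beta$ below $\tfrac{1}{2\alpha}$ to tame the noise term revives the log in the truncation term. (ii) To assert that $\beta^\star$ is truly optimal rather than merely sufficient one needs the three monomials to be $\Theta$ and not just $\mathcal{O}$; this is automatic for the noise terms, but the truncation term is $\Theta(n^{2-\alpha\beta})$ only when $c_1<1$ (the degenerate case $c_1=\beta=1$, where $k=r$ and there is no truncation, is excluded). (iii) The bound \eqref{asymptotic_error} inherited from Theorem~\ref{Theorem_04} presupposes $k,r\to\infty$, i.e.\ $\alpha\beta>0$, which holds for every feasible $\beta$. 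Collecting these remarks completes the argument.
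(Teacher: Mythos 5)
Your proposal is correct and follows essentially the same route as the paper: use the magnitude constraint with $m=n$ to get $\lambda=O(n)$, reduce the bound of Theorem~\ref{Theorem_04} to the three exponents $2-\alpha\beta$, $1+\alpha\beta$, and $3\alpha\beta$, and minimise their maximum over the feasible range (the paper does this via the substitution $x=\alpha\beta\in(0,\alpha]$, you work directly in $\beta$), arriving at $\beta^{\star}=\min\{1,\tfrac{1}{2\alpha}\}$ by the same balancing/monotonicity argument. Your remark about the residual $\ln n$ factor at the balance point (so that for $\alpha\ge\tfrac12$ the bound is really $\mathcal{O}(n^{3/2}\ln n)$, to be read as a polynomial growth rate) is a fair caveat that the paper's proof glosses over by declaring the logarithm negligible.
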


\begin{proof}[Proof]
Based on Theorem~\ref{Theorem_04}, the asymptotic error of the proposed scheme is bounded by
    \begin{equation}\label{asymupperbound}
        \mathcal{E}^{\prime\prime} 
        \leq
        \mathcal{O} \! \left( n^{2 - \alpha \beta} \right)
        +
        \mathcal{O} \! \left( n^{\alpha \beta + 1} \ln n \right)
        +
        \mathcal{O} \! \left( n^{3 \alpha \beta} \right).
    \end{equation}
    To obtain the optimal scaling, we minimize the dominant term of \eqref{asymupperbound} as $n \to \infty$, which yields the minimized asymptotic error.
    
    Let $x \triangleq \alpha \beta \in (0, \alpha]$. 
    The bound in \eqref{asymupperbound} becomes $\mathcal{O}(n^{2-x}) + \mathcal{O}(n^{1+x}\ln n) + \mathcal{O}(n^{3x})$.
    Since $ \ln n = o(n^{\varepsilon}) $ for every $\varepsilon >0$,
    the dominant component is $\max\{n^{2-x},n^{1+x},n^{3x}\}$. 
    To obtain the minimum error bound, we minimize $\max \{2-x, 1+x, 3x\}$ over $x \in (0,\alpha]$.
    When  $\alpha \in (\tfrac{1}{2}, 1]$, the minimum value of the bound is achieved at the balance of the three components $2-x = 1+ x = 3x$, i.e., $x = \tfrac{1}{2}$.
    When $\alpha \in (0, \tfrac{1}{2}]$, the feasible region for $x$ is $(0, \tfrac{1}{2}]$ since $\beta \in (0,1]$, which yields $2-x \geq 1+ x \geq 3x$.
    Thus, the dominant term is always $ n^{2-x}$, and the minimum value is achieved at $x = \alpha$.
    Combining the above two cases, we obtain $x^\star=\min\{\alpha, \frac{1}{2}\}$. 
    
    Given $\alpha$, we can derive the optimal $\beta^{\star} $ from $x^\star=\min\{\alpha, \frac{1}{2}\}$.
    If $\alpha \in (\tfrac{1}{2}, 1]$, the optimal $x^{\star} = \tfrac{1}{2}$, which yields $\beta^{\star} = \tfrac{1}{2\alpha}$ and $\mathcal{E}^{\prime\prime}_{\min} = \mathcal{O} (n^{\frac{3}{2}})$.
    If $\alpha \in (0, \tfrac{1}{2}]$, the optimal $x^{\star} = \alpha$, which yields $\beta^\star=1$, and $\mathcal{E}^{\prime\prime}_{\min} = \mathcal{O} (n^{2-\alpha})$.
\end{proof}

Proposition~\ref{Proposition_5} indicates that the order of the minimum asymptotic error of the proposed scheme in \eqref{minerrorasym} is smaller than that of the baseline scheme, $\mathcal{E}^{\prime} = \mathcal{O} (n^{2})$, demonstrating that the proposed scheme outperforms the baseline scheme.

\subsection{A numerical example}

\begin{figure}[!t]
	\centering
    \vspace{-5pt}
	\includegraphics[width = 0.975 \linewidth]{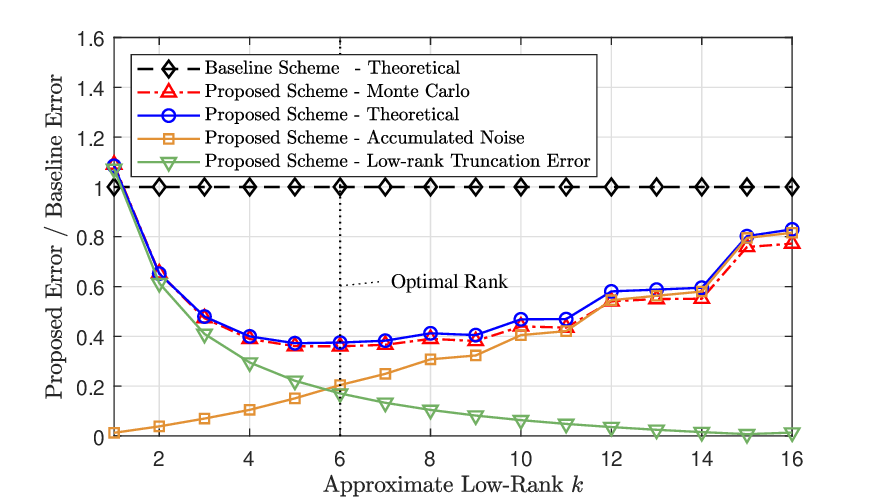}
	\vspace{-10pt}
	\caption{Comparison of proposed scheme and baseline scheme.\hspace{30pt}}
    \vspace{-10pt}
	\label{numericalresults}
\end{figure}

In Fig.~\ref{numericalresults}, we use a numerical example to compare the computation error of the proposed scheme with the baseline scheme. 
The matrix is $\mathbf{A}\in\mathbb{R}^{100\times 100}$ with singular-value profile $\sigma_i=\lambda/i$ and $r=16$. 
Noise variance is $\sigma_e^2=\sigma_L^2=\sigma_R^2=0.05$ and $\sigma_b^2=3$. 
The black curve shows the normalized baseline error, which is set to $1$. 
For each $k$, $10,000$ Monte Carlo trials are performed to validate the analytical expression in~\eqref{approximate_error}. 
This illustrates the effectiveness of proposed scheme compared with the baseline scheme.
For small $k$, the truncation error dominates, so increasing $k$ reduces the error. 
For larger $k$, the benefits of approximation diminish while the effective matrix size grows and the accumulated noise becomes dominant, which causes the error to rise. 
This tradeoff implies the existence of an optimal rank $k$ that minimizes the computation error under the fixed memristor budget.


\bibliographystyle{IEEEbib}
\bibliography{refs}

\end{document}